\newcommand{\s}[1]{\{#1\}}
\newcommand{\calu}{\mathcal{U}}
\newcommand{\calv}{\mathcal{V}}
\newcommand{\calw}{\mathcal{W}}
\newcommand{\D}{\mathbb{D}}
\newcommand{\Z}{\mathbb{Z}}
\newcommand{\C}{\mathbb{C}}
\newcommand{\Q}{\mathbb{Q}}
\newcommand{\zz}[1]{\zeta_{2^{#1}}}
\newcommand{\zzk}{\zeta_{2^k}}
\newcommand{\Zz}[1]{\mathbb{Z}[\zeta_{2^{#1}}]}
\newcommand{\Zzk}{\mathbb{Z}[\zeta_{2^k}]}
\newcommand{\Dz}[1]{\mathbb{Z}[1/2,\zeta_{2^{#1}}]}
\newcommand{\Dzk}{\mathbb{Z}[1/2,\zeta_{2^k}]}
\newcommand{\DDz}[1]{\mathbb{D}[\zeta_{2^{#1}}]}
\newcommand{\DDzk}{\mathbb{D}[\zeta_{2^k}]}
\newcommand{\uz}[1]{\mathrm{U}(\Dz{#1})}
\newcommand{\uzk}{\mathrm{U}(\Dzk)}
\newcommand{\unnk}[1]{\mathrm{U}_{#1}(\Dzk)}
\newcommand{\DDuz}[1]{\mathrm{U}(\DDz{#1})}
\newcommand{\DDuzk}{\mathrm{U}(\DDzk)}
\newcommand{\DDunnk}[1]{\mathrm{U}_{#1}(\DDzk)}
\newcommand{\DDunnkk}[2]{\mathrm{U}_{#1}(\DDz{#2})}
\newcommand{\gensk}{\mathcal{G}_{2^k}}
\newcommand{\gensn}[1]{\mathcal{G}_{#1}}
\newcommand{\hgate}{H'}
\begin{document}

\title{
  Exact Synthesis of Multiqubit \texorpdfstring{\\}{} 
  Clifford-Cyclotomic Circuits
}

\author{
  Matthew Amy\inst{1}\orcidlink{0000-0003-3514-420X}\and
  Andrew N. Glaudell\inst{2}\orcidlink{0000-0001-9824-5804}\and
  Shaun Kelso\inst{3}\orcidlink{0009-0007-1420-5938}\and \\
  William Maxwell\inst{3,4}\orcidlink{0009-0005-8603-2955}\and
  Samuel S. Mendelson\inst{3}\orcidlink{0000-0002-8413-9872}\and
  Neil J. Ross\inst{5}\orcidlink{0000-0003-0941-4333}
}

\authorrunning{M. Amy et al.}

\institute{Simon Fraser University, Burnaby, BC, Canada\\
\email{matt\_amy@sfu.ca}\and
Photonic Inc., Vancouver, BC, Canada\\
\email{andrewglaudell@gmail.com}\and
NSWC Dahlgren Division, Dahlgren, VA, U.S.A.\\
\texttt{shaun.f.kelso.civ@us.navy.mil, samuel.mendelson@gmail.com}\and
Sandia National Laboratories, Albuquerque, NM, U.S.A.\\
\email{wjmaxwell@sandia.gov}\and
Dalhousie University, Halifax, NS, Canada\\
\email{neil.jr.ross@dal.ca}
}  

\maketitle

\begin{abstract}
Let $n\geq 8$ be divisible by 4. The Clifford-cyclotomic gate set $\mathcal{G}_n$ is the universal gate set obtained by extending the Clifford gates with the $z$-rotation $T_n = \mathrm{diag}(1,\zeta_n)$, where $\zeta_n$ is a primitive $n$-th root of unity. In this note, we show that, when $n$ is a power of 2, a multiqubit unitary matrix $U$ can be exactly represented by a circuit over $\mathcal{G}_n$ if and only if the entries of $U$ belong to the ring $\Z[1/2,\zeta_n]$. We moreover show that $\log(n)-2$ ancillas are always sufficient to construct a circuit for $U$. Our results generalize prior work to an infinite family of gate sets and show that the limitations that apply to single-qubit unitaries, for which the correspondence between Clifford-cyclotomic operators and matrices over $\Z[1/2,\zeta_n]$ fails for all but finitely many values of $n$, can be overcome through the use of ancillas.

\keywords{Quantum circuits \and Exact synthesis \and Clifford-cyclotomic.}
\end{abstract}

\section{Introduction}
\label{sec:intro}

\subsection{Background}
\label{ssec:background}

Let $n\geq 8$ be an integer divisible by 4. The \textbf{single-qubit Clifford-cyclotomic gate set of degree $n$} was introduced in \cite{forest2015exact} and consists of the gates 
\[
\hgate = 
\frac{1}{2}
\begin{bmatrix}
1+i & 1+i \\
1+i & -1-i
\end{bmatrix}
\qquad
\mbox{and}
\qquad
T_{n} = 
\begin{bmatrix}
1     & \cdot \\
\cdot & \zeta_n
\end{bmatrix},
\]
where $\zeta_n=e^{2\pi i /n}$ is a primitive $n$-th root of unity, $\hgate=\zeta_8 H$ is equal to the usual \textbf{Hadamard gate} $H$ up to a global phase of $\zeta_8$, and $T_n$ is a $z$-rotation gate of order $n$. The gate $S=T_n^{n/4}$ is the usual \textbf{phase gate} and the gate $T_8$ is simply known as the \textbf{$T$ gate}. The single-qubit Clifford-cyclotomic gate set is a universal extension of the \textbf{single-qubit Clifford  gate set} $\s{\hgate,S}$; it coincides with the well-studied \textbf{single-qubit Clifford+$T$ gate set} when $n=8$.

The entries of $\hgate$ and $T_n$ lie in $\Z[1/2, \zeta_n]$, the smallest subring of $\C$ containing $1/2$ and $\zeta_n$. As a consequence, if a $2$-dimensional  unitary matrix $U$ can be exactly represented by a single-qubit Clifford-cyclotomic circuit of degree $n$, then the entries of $U$ belong to $\Z[1/2, \zeta_n]$. In their seminal 2012 paper \cite{kliuchnikov2013fast}, Kliuchnikov, Maslov, and Mosca proved that the converse implication holds when $n=8$: every $2$-dimensional unitary matrix with entries in $\Z[1/2,\zeta_8]$ can be exactly represented by a Clifford+$T$ circuit. Thus, single-qubit Clifford+$T$ operators correspond precisely to elements of $\mathrm{U}_2(\Z[1/2,\zeta_8])$, the group of $2\times 2$ unitary matrices over $\Z[1/2,\zeta_8]$. Forest et al. later showed in \cite{forest2015exact} that such a correspondence holds when $n$ is one of $8$, $12$, $16$, or $24$, but, disappointingly, that it fails for almost all other values of $n$. Ingalls et al. put the nail in this coffin in 2019 by proving that  $8$, $12$, $16$, and $24$ are in fact the only values of $n$ for which such a correspondence holds \cite{ingalls_jordan_keeton_logan_zaytman_2021}, as had been previously conjectured by Sarnak~\cite{sarnak}.

The \textbf{multiqubit Clifford-cyclotomic gate set of degree $n$}, which we denote $\mathcal{G}_n$, is obtained by adding the \textbf{controlled-NOT gate} 
\[
CX= I_2 \oplus \begin{bmatrix}
\cdot & 1 \\
1 & \cdot
\end{bmatrix}
\]
to the single-qubit Clifford-cyclotomic gate set of degree $n$. In other words, $\mathcal{G}_n$ is the extension of the \textbf{multiqubit Clifford gate set} $\s{\hgate, S, CX}$ by the $z$-rotation $T_n$. For convenience, we set $\mathcal{G}_2 = \s{X, CX, CCX, H\otimes H}$ and $\mathcal{G}_4 = \s{X, CX, CCX, S, \hgate}$, where 
\[
X= \begin{bmatrix}
\cdot & 1 \\
1 & \cdot
\end{bmatrix},
\qquad
CCX = I_6\oplus X,
\qquad
\mbox{and}
\qquad
H\otimes H = \frac{1}{2}
\begin{bmatrix}
1 &  1 &  1 &  1 \\
1 & -1 &  1 & -1 \\
1 &  1 & -1 & -1 \\
1 & -1 & -1 &  1
\end{bmatrix}.
\]
The gates $X$ and $CCX$ are the usual \textbf{NOT gate} and \textbf{doubly-controlled-NOT gate} (or \textbf{Toffoli gate}), respectively.

In \cite{Giles2013a}, Giles and Selinger extended Kliuchnikov, Maslov, and Mosca's 2012 result to the multiqubit setting by proving that a unitary matrix $U$ of dimension $2^m$ can be represented by an $m$-qubit circuit over $\mathcal{G}_8$ if and only if the entries of $U$ lie in the ring $\Z[1/2, \zeta_8]$. In \cite{Amy2020}, some of the present authors showed how to adapt the methods of Giles and Selinger to a handful of other gate sets, including $\mathcal{G}_2$ and $\mathcal{G}_4$. In the multiqubit context, circuits can use ancillary qubits, provided that they are initialized and terminated in the computational basis state $\ket{0}$. It was shown in \cite{Amy2020} and \cite{Giles2013a} that a single ancilla is always sufficient to construct the desired circuits. 

Clifford-cyclotomic circuits, and in particular those of degree $2^k$ for some positive integer $k$, are  ubiquitous in quantum computation; they appear in Shor's factoring algorithm \cite{shor}, the study of the Clifford hierarchy \cite{hierarchy}, and protocols for state distillation \cite{distillation}.

\subsection{Contributions}
\label{ssec:contribs}

Let $k$ and $m$ be positive integers. In the present note, we show that a $2^m$-dimensional unitary matrix $U$ can be exactly represented by an $m$-qubit Clifford-cyclotomic circuit of degree $2^k$ if and only if the entries of $U$ lie in the ring $\Dzk$. To construct a circuit for $U$, a single ancilla suffices, when $k\leq 2$, and $k-2$ ancillas suffice, when $k>2$.

Our results extend those of \cite{Amy2020} and \cite{Giles2013a} to an infinite family of multiqubit gate sets, but our proof is surprisingly simple. It relies on the fact that the root of unity $\zzk$ can be represented by a 2-dimensional unitary matrix over $\Dz{k-1}$, and that this representation can be used to define a well-behaved function $\phi_k:\uzk \to \uz{k-1}$ mapping unitary matrices over $\Dzk$ to unitary matrices over $\Dz{k-1}$. The function $\phi_k$ generalizes the standard real representation of complex numbers which was used by Aharonov in \cite{aharonov2003} to prove the universality of the Toffoli-Hadamard gate set and is an example of a \textbf{catalytic embedding} \cite{catemb}. One can think of our results as circumventing the no-go theorems of \cite{forest2015exact} and \cite{ingalls_jordan_keeton_logan_zaytman_2021} through the use of ancillas: there are elements of $\unnk{2}$ that cannot be represented by an ancilla-free single-qubit circuit over $\gensk$, but every such element becomes representable if sufficiently many additional qubits are available.

\subsection{Contents}
\label{ssec:contents}

The note is organized as follows. In \cref{sec:cyclo}, we briefly review some important properties of the ring $\Dzk$. We introduce catalytic embeddings in \cref{sec:cats} and define the catalytic embedding $\phi_k$. \cref{sec:exactsynth} contains the proof of our main result. We discuss future work in \cref{sec:conc}.

\section{Cyclotomic Integers}
\label{sec:cyclo}

We start by briefly discussing the rings of \textbf{cyclotomic integers} that will be of interest in the rest of the note. For further details, the reader is encouraged to consult \cite{cyclo}.

Let $k$ be a positive integer. The ring $\Zzk$ is the smallest subring of $\C$ containing $\zzk$. Hence, $\Zz{1}=\Z$. Moreover, when $k>1$, we have $\zzk^2=\zz{k-1}$ and therefore $\Zz{k-1}\subseteq \Zzk$. It will be useful for our purposes to further note that, for $k>1$, 
\begin{equation}
\label{eq:zzk}
\Zzk = \s{a+b\zz{k}\mid a,b\in\Zz{k-1}}.
\end{equation}
The linear combinations in \cref{eq:zzk} are unique. That is, every element of $\Zzk$ can be uniquely written as $a+b\zz{k}$, for some $a,b\in\Zz{k-1}$.

We will be interested in an extension of $\Zzk$ obtained by localizing $\Zzk$ at 2, i.e., by adding denominators that are powers of 2. The resulting ring is 
\begin{equation}
\label{eq:dzk}
\Dzk = \s{a/ 2^\ell \mid a\in \Zzk, \ell \in \Z}.
\end{equation}

For brevity, and in keeping with prior work (see, e.g., \cite{Amy2020,Giles2013a}), we denote $\Dzk$ by $\DDzk$ in what follows. This notation emphasizes the fact that $\Dzk$ can be seen as the extension by $\zzk$ of the ring $\D=\s{a/2^\ell \mid a\in \Z, \ell \in \Z }$ of \textbf{dyadic rationals}.

\begin{lemma}
\label{lem:uniqueness}
Let $k\geq 2$. Every element of $\DDzk$ can be uniquely written as $a+b\zzk$ for some $a,b\in\DDz{k-1}$.
\end{lemma}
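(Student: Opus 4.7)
The plan is to reduce both existence and uniqueness to the corresponding statement over $\Zzk$, which is given in equation~\eqref{eq:zzk}, by clearing denominators. The whole proof is essentially a localization argument: since $\DDzk$ is obtained from $\Zzk$ by inverting the single element $2$, any element of $\DDzk$ or any relation among elements of $\DDzk$ can be rescaled to live inside $\Zzk$.

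For existence, I would take an arbitrary $x\in\DDzk$ and use \eqref{eq:dzk} to write $x = c/2^\ell$ for some $c\in\Zzk$ and $\ell\in\Z$. Applying \eqref{eq:zzk} to $c$, I get $c = a'+b'\zzk$ with $a',b'\in\Zz{k-1}$, and hence $x = (a'/2^\ell)+(b'/2^\ell)\zzk$. Since $a'/2^\ell$ and $b'/2^\ell$ lie in $\DDz{k-1}$ by definition, this furnishes the desired decomposition.

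For uniqueness, I would suppose $a_1+b_1\zzk = a_2+b_2\zzk$ with $a_i,b_i\in\DDz{k-1}$, and set $a=a_1-a_2$ and $b=b_1-b_2$, so that $a+b\zzk=0$ in $\DDzk$. Writing $a$ and $b$ over a common denominator, say $a = a'/2^\ell$ and $b = b'/2^\ell$ with $a',b'\in\Zz{k-1}$, and multiplying through by $2^\ell$, I obtain $a'+b'\zzk=0$ in $\Zzk$. The uniqueness statement attached to \eqref{eq:zzk} then forces $a'=b'=0$, whence $a=b=0$. The only mildly delicate point, and the one I would be careful about, is justifying that $a$ and $b$ can be put over a common power-of-two denominator with numerators in $\Zz{k-1}$ (not merely in $\Zzk$); this follows because the definition of $\DDz{k-1}$ in \eqref{eq:dzk} places each of them individually in that form, and taking the larger of the two exponents yields a common denominator while keeping the numerators in $\Zz{k-1}$.
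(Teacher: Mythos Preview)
Your proposal is correct and follows essentially the same approach as the paper: both arguments establish existence by combining \eqref{eq:zzk} and \eqref{eq:dzk}, and both establish uniqueness by multiplying through by a sufficiently large power of $2$ to reduce to the known uniqueness over $\Zzk$. Your version is slightly more explicit about choosing the common denominator, but there is no substantive difference.
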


\begin{proof}
\cref{eq:zzk,eq:dzk} jointly imply that every element of $\DDzk$ can be written as $a+b\zzk$ for some $a,b\in\DDz{k-1}$. To see that this expression is unique, let $a,b,a',b'\in \DDz{k-1}$ and suppose that $a+b\zzk = a'+b'\zzk$. By choosing $\ell$ large enough, $2^\ell(a+b\zzk) = 2^\ell(a'+b'\zzk)$ becomes an equation over $\Zzk$, from which we get $a=a'$ and $b=b'$.
\end{proof}

\section{Catalytic Embeddings}
\label{sec:cats}

We now define \textbf{catalytic embeddings}. The definition introduced below is a special case of the more general notion of catalytic embedding used in \cite{catemb}, but it suffices for our purposes.

Let $\calu$ and $\calv$ be collections of unitaries. An \textbf{$\ell$-dimensional catalytic embedding} of $\calu$ into $\calv$ is a pair $(\phi,\ket{\lambda})$ consisting of a function $\phi:\calu \to \calv$ and a quantum state $\ket{\lambda}\in\C^\ell$ such that if $U\in\calu$ has dimension $d$ then $\phi(U)\in\calv$ has dimension $d\ell$, and
\begin{equation}
\label{eq:catalytic}
\phi(U) (\ket{u}\otimes\ket{\lambda}) = (U\ket{u})\otimes \ket{\lambda}
\end{equation}
for every $\ket{u}\in\C^d$. We refer to the state $\ket{\lambda}$ as the \textbf{catalyst} and to \cref{eq:catalytic} as the \textbf{catalytic condition}. We sometimes write  $(\phi,\ket{\lambda}):\calu\to\calv$ to indicate that $(\phi,\ket{\lambda})$ is a catalytic embedding of $\calu$ into $\calv$. If $(\phi,\ket{\lambda}):\calu\to\calv$ and $(\psi,\ket{\omega}):\calv\to\calw$ are catalytic embeddings, then $(\psi\circ \phi, \ket{\lambda}\otimes\ket{\omega})$ is a catalytic embedding of $\calu$ into $\calw$, since
\[
\psi(\phi(U))(\ket{u}\otimes\ket{\lambda}\otimes\ket{\omega}) = (\phi(U)(\ket{u}\otimes\ket{\lambda}))\otimes\ket{\omega} = (U\ket{u})\otimes\ket{\lambda}\otimes\ket{\omega}.
\]
We refer to this catalytic embedding as the \textbf{concatenation} of $(\phi,\ket{\lambda})$ and $(\psi,\ket{\omega})$. The concatenation of catalytic embeddings is associative and $(I_\calu, [1]):\calu \to \calu$ acts as the identity for concatenation.

Now let $\DDuzk$ denote the collection of all unitary matrices over $\DDzk$. The rest of this section is dedicated to constructing, for every $k\geq 2$, a $2$-dimensional catalytic embedding $\DDuzk \to \DDuz{k-1}$. To this end, we define the state $\ket{\lambda_k}$ and the matrix $\Lambda_k$ as
\[
\ket{\lambda_k} = \frac{1}{\sqrt 2}\begin{bmatrix} 1 \\ \zzk \end{bmatrix}
\quad
\mbox{ and }
\quad
\Lambda_k = 
\begin{bmatrix}
0 &  1 \\
\zz{k-1} & 0 
\end{bmatrix},
\]
respectively. Note that $\Lambda_k$ is a unitary matrix and $\ket{\lambda_k}$ is an eigenvector of $\Lambda_k$ for eigenvalue $\zzk$. To verify the latter claim, we compute:
\begin{equation}
\label{eq:eigen}
\Lambda_k \ket{\lambda_k} = \begin{bmatrix}
0 &  1 \\
\zz{k-1} & 0 
\end{bmatrix} \frac{1}{\sqrt 2}\begin{bmatrix} 1 \\ \zzk \end{bmatrix}
=\frac{1}{\sqrt 2}\begin{bmatrix} \zzk \\ \zz{k-1} \end{bmatrix}
=\frac{1}{\sqrt 2}\begin{bmatrix} \zzk \\ \zzk^2 \end{bmatrix}
=\zzk\ket{\lambda_k}.
\end{equation}
Note further that $\zzk^\dagger = \zz{k-1}^\dagger\zzk$ and that $\Lambda_k^\dagger = \zz{k-1}^\dagger \Lambda_k$. In order to define the desired catalytic embedding, we start by showing that the matrix $\Lambda_k$ can be used to define a function $\DDuzk \to \DDuz{k-1}$.

\begin{lemma}
\label{lem:unitarity}
Let $k\geq 2$, let $A$ and $B$ be matrices over $\DDz{k-1}$, and assume that $A+B\zzk\in\DDuzk$. Then $A\otimes I + B\otimes \Lambda_k\in\DDuz{k-1}$.
\end{lemma}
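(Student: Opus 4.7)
The plan is to extract from the unitarity of $U=A+B\zzk$ two matrix identities over $\DDz{k-1}$, and then show that these identities force $\tilde U := A\otimes I + B\otimes \Lambda_k$ to satisfy $\tilde U^\dagger \tilde U = I$. The entries of $\tilde U$ lie in $\DDz{k-1}$ essentially by inspection, since $A$, $B$, and $\Lambda_k$ all have entries in $\DDz{k-1}$, so the only real content is checking unitarity.

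First, I would expand $U^\dagger U$. Using $\zzk^\dagger = \zz{k-1}^\dagger \zzk$ (which follows from $\zzk^2 = \zz{k-1}$ together with $|\zzk|=1$), we have $U^\dagger = A^\dagger + \zz{k-1}^\dagger B^\dagger \zzk$, and a direct multiplication yields
\[
U^\dagger U = (A^\dagger A + B^\dagger B) + (A^\dagger B + \zz{k-1}^\dagger B^\dagger A)\,\zzk,
\]
where I have used $\zz{k-1}^\dagger \zzk^2 = \zz{k-1}^\dagger\zz{k-1} = 1$ to simplify the $\zzk^2$ term. Setting this equal to $I$ and invoking the uniqueness of the decomposition in \cref{lem:uniqueness} (applied entrywise) gives the two key relations
\[
A^\dagger A + B^\dagger B = I, \qquad A^\dagger B + \zz{k-1}^\dagger B^\dagger A = 0.
\]

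Next, I would compute $\tilde U^\dagger \tilde U$ using two elementary properties of $\Lambda_k$: namely, $\Lambda_k^\dagger = \zz{k-1}^\dagger \Lambda_k$ (already noted in the paper) and $\Lambda_k^2 = \zz{k-1} I$ (a direct calculation from the definition of $\Lambda_k$). Expanding,
\[
\tilde U^\dagger \tilde U = A^\dagger A \otimes I + A^\dagger B \otimes \Lambda_k + \zz{k-1}^\dagger B^\dagger A \otimes \Lambda_k + \zz{k-1}^\dagger B^\dagger B \otimes \Lambda_k^2.
\]
The last term simplifies to $B^\dagger B \otimes I$ using $\Lambda_k^2 = \zz{k-1} I$. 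Collecting terms by their tensor factor then yields
\[
\tilde U^\dagger \tilde U = (A^\dagger A + B^\dagger B)\otimes I + (A^\dagger B + \zz{k-1}^\dagger B^\dagger A)\otimes \Lambda_k,
\]
which is exactly $I\otimes I + 0 \otimes \Lambda_k = I$ by the two relations above. Since $\tilde U$ is a square matrix over $\DDz{k-1}$ and $\tilde U^\dagger \tilde U = I$, we conclude $\tilde U \in \DDuz{k-1}$.

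The only step requiring any care is the invocation of uniqueness from \cref{lem:uniqueness} to split $U^\dagger U = I$ into two equations: one must apply the lemma entrywise and verify that $A^\dagger A + B^\dagger B$ and $A^\dagger B + \zz{k-1}^\dagger B^\dagger A$ indeed have entries in $\DDz{k-1}$ (which is immediate from the assumption on $A$ and $B$). After that, the calculation is mechanical, and the conspicuously matching form of the two expansions — once $\Lambda_k^2$ is reduced to $\zz{k-1} I$ — is what makes the argument work.
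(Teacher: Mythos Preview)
Your proof is correct and follows essentially the same route as the paper's: expand $U^\dagger U=I$ and use uniqueness (\cref{lem:uniqueness}) to extract $A^\dagger A + B^\dagger B = I$ and $A^\dagger B + \zz{k-1}^\dagger B^\dagger A = 0$, then plug these into the expansion of $\tilde U^\dagger \tilde U$. The only cosmetic difference is that the paper reduces the $B^\dagger B$ cross term via $\Lambda_k^\dagger\Lambda_k = I$ whereas you use $\Lambda_k^2 = \zz{k-1} I$; given $\Lambda_k^\dagger = \zz{k-1}^\dagger\Lambda_k$ these are equivalent, and your observation that $\tilde U^\dagger\tilde U = I$ already forces $\tilde U\tilde U^\dagger = I$ for a finite square matrix makes the paper's ``reasoning analogously'' remark unnecessary.
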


\begin{proof}
Let $k$, $A$, and $B$ be as stated. Since $A+B\zzk$ is unitary and $\zzk^\dagger = \zz{k-1}^\dagger\zzk$, we have 
\[
I = (A+B\zzk)^\dagger(A+B\zzk) 
  = (A^\dagger A + B^\dagger B) + (A^\dagger B +  B^\dagger A \zz{k-1}^\dagger)\zzk. 
\]
Hence, $A^\dagger A + B^\dagger B=I$ and $A^\dagger B +  B^\dagger A \zz{k-1}^\dagger =0$. Now, since $\Lambda_k^\dagger = \zeta_{2^{k-1}}^\dagger \Lambda_{k}$ and $\Lambda_k$ is unitary, we have 
\[
(A\otimes I + B\otimes \Lambda_k)^\dagger (A\otimes I + B\otimes \Lambda_k) 
 = (A^\dagger A + B^\dagger B)\otimes I + (A^\dagger B + B^\dagger A \zz{k-1}^\dagger)\otimes \Lambda_k 
 = I.
\]
Reasoning analogously, one can also show that $(A\otimes I + B\otimes \Lambda_k)(A\otimes I + B\otimes \Lambda_k)^\dagger =I$. This proves that $A\otimes I + B\otimes \Lambda_k$ is indeed unitary.
\end{proof}

\begin{proposition}
\label{prop:emb-step}
Let $k\geq 2$ and let $\phi_k:\DDuzk \to \DDuz{k-1}$ be the function defined by 
\[
\phi_k : A+B\zzk \mapsto A\otimes I + B\otimes \Lambda_k.
\]
Then the pair $(\phi_k,\ket{\lambda_k})$ is a 2-dimensional catalytic embedding of $\DDuz{k}$ into $\DDuz{k-1}$.
\end{proposition}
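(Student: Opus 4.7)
The plan is to verify all the requirements of a 2-dimensional catalytic embedding in turn, relying on the groundwork already laid by \cref{lem:uniqueness} and \cref{lem:unitarity} together with the eigenvalue equation in \cref{eq:eigen}. Most of the real work has already been done: I only need to check well-definedness of $\phi_k$, the dimension condition, and the catalytic condition.

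First I would argue that $\phi_k$ is well-defined. Given $U\in\DDuzk$ of dimension $d$, the uniqueness statement of \cref{lem:uniqueness} applied entrywise to $U$ produces unique $d\times d$ matrices $A$ and $B$ over $\DDz{k-1}$ such that $U = A+B\zzk$. Thus the prescription $\phi_k(U) = A\otimes I + B\otimes \Lambda_k$ determines a function on $\DDuzk$. That the image lies in $\DDuz{k-1}$ is exactly \cref{lem:unitarity}. Since $I$ and $\Lambda_k$ are $2\times 2$, the matrix $\phi_k(U)$ has dimension $2d$, matching the required dimension $d\cdot\ell$ with $\ell=2$.

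The remaining step is the catalytic condition. Here the eigenvalue computation \cref{eq:eigen}, which gives $\Lambda_k\ket{\lambda_k} = \zzk \ket{\lambda_k}$, is the crucial input. For any $\ket{u}\in\C^d$, I would compute
\[
\phi_k(U)\bigl(\ket{u}\otimes\ket{\lambda_k}\bigr)
 = A\ket{u}\otimes \ket{\lambda_k} + B\ket{u}\otimes \Lambda_k\ket{\lambda_k}
 = \bigl((A + B\zzk)\ket{u}\bigr)\otimes\ket{\lambda_k}
 = (U\ket{u})\otimes\ket{\lambda_k},
\]
using bilinearity of the tensor product and the eigenvalue identity in the middle step.

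There is no real obstacle here; the proposition is essentially a bookkeeping assembly of the preceding lemmas, with the eigenvector property of $\ket{\lambda_k}$ doing all the nontrivial work. The only mildly subtle point is that one must invoke \cref{lem:uniqueness} to know that the decomposition $U=A+B\zzk$, and hence $\phi_k(U)$, is unambiguous.
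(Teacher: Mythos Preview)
Your proposal is correct and follows essentially the same route as the paper: invoke \cref{lem:uniqueness} for well-definedness, \cref{lem:unitarity} for unitarity of the image, observe the dimension doubling, and then verify the catalytic condition via the eigenvalue identity \cref{eq:eigen}. The paper's proof is identical in structure, merely writing out the final chain of equalities in a few more intermediate steps.
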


\begin{proof}
Every element $U$ of $\DDuzk$ can be uniquely written as $U=A + B\zzk$, where $A$ and $B$ are matrices over $\DDz{k-1}$. Hence, \cref{lem:unitarity} implies that $\phi_k:\DDuzk \to \DDuz{k-1}$ is indeed a function. Moreover, by construction, $\phi_k(U)\in \DDuz{k-1}$ has dimension $2d$, if $U\in\DDuzk$ has dimension $d$. Now let $\ket{u}\in\C^n$. Then 
\begin{align*}
\phi_k(U) (\ket{u}\otimes\ket{\lambda_k}) &= (A\otimes I + B \otimes \Lambda_{k}) (\ket{u}\otimes\ket{\lambda_k})\\
&= A\ket{u}\otimes I \ket{\lambda_k}+ B\ket{u} \otimes \Lambda_{k}\ket{\lambda_k} \\
&= A\ket{u}\otimes \ket{\lambda_k}+ B\ket{u} \otimes \zzk\ket{\lambda_k} \\ 
&= A\ket{u}\otimes \ket{\lambda_k}+ B\zzk \ket{u} \otimes \ket{\lambda_k} \\ 
&= (A\ket{u}+ B\zzk \ket{u})\otimes\ket{\lambda_k} \\ 
&=(U\ket{u})\otimes \ket{\lambda_k}. 
\end{align*}
Thus, $(\phi_k,\ket{\lambda_k})$ is a 2-dimensional catalytic embedding $\DDuzk \to \DDuz{k-1}$.
\end{proof}

\begin{remark}
\label{rem:standard}
The catalytic embedding of \cref{prop:emb-step} is an example of what is called a \textbf{standard catalytic embedding} in \cite{catemb}. At the heart of this construction lies the fact that $\zzk$ can be represented by the matrix $\Lambda_k$, whose characteristic polynomial is also the minimal polynomial of $\zeta_k$ over $\Q[\zz{k-1}]$. A more general description of this construction can be found in \cite{catemb}.
\end{remark}

\section{Exact Synthesis}
\label{sec:exactsynth}

We now prove our main result. While it is clear that if a unitary $U$ can be represented by a circuit over $\gensk$ then it is an element of $\DDuz{k}$, the challenge is to show that the converse implication is also true. The main idea behind the proof is to use \cref{prop:emb-step} to inductively reduce the problem for $\DDuz{k}$ to the problem for $\DDuz{k-1}$, and so on until one reaches a case for which the result is known, such as $\DDuz{3}$, $\DDuz{2}$, or $\DDuz{1}$. We formalize this intuition in the proposition below.

\begin{theorem}
\label{thm:exactsynth}
Let $k$ and $m$ be positive integers. A $2^m\times 2^m$ unitary matrix $U$ can be exactly represented by an $m$-qubit circuit over $\gensk$ if and only if  $U\in \DDunnk{2^m}$. Moreover, to construct a circuit for $U$, a single ancilla suffices, when $k\leq 2$, and $k-2$ ancillas suffice, when $k>2$.
\end{theorem}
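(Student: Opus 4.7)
The forward direction is immediate: each generator of $\gensk$ has entries in $\DDzk$, so every ancilla-assisted circuit over $\gensk$ computes an element of $\DDunnk{2^m}$ once the ancillas are set and returned to $\ket{0}$. For the converse, together with the ancilla bound, my plan is to argue by induction on $k$, iteratively applying the catalytic embedding $\phi_k$ of \cref{prop:emb-step} to reduce the problem for $\DDuzk$ to the problem for $\DDuz{k-1}$ until a base case over an already-understood gate set is reached.

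The base cases are $k=1,2,3$. The cases $k=1$ and $k=2$ are covered by the exact-synthesis theorems of \cite{Amy2020} for $\gensn{2}$ and $\gensn{4}$, each with a single ancilla; the case $k=3$ is the Giles--Selinger theorem for Clifford$+T$ from \cite{Giles2013a}, also with a single ancilla, which matches the formula $k-2$.

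For the inductive step, fix $k\geq 4$ and assume the theorem for $k-1$. Given $U\in\DDunnk{2^m}$, \cref{prop:emb-step} produces $\phi_k(U)\in\DDunnkk{2^{m+1}}{k-1}$, so the inductive hypothesis yields an $(m+1)$-qubit circuit $C$ over $\gensn{2^{k-1}}$ realizing $\phi_k(U)$ with $k-3$ ancillas. Since $T_{2^{k-1}} = T_{2^k}^2$, the inclusion $\gensn{2^{k-1}}\subseteq\gensk$ holds, so $C$ is also a circuit over $\gensk$. I then assemble a circuit for $U$ on $m$ data wires using $k-2$ ancillas by reserving one ancilla for the catalyst $\ket{\lambda_k}$ and the remaining $k-3$ ancillas to serve as the ancillas of $C$. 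The catalyst is preparable from $\ket{0}$ inside $\gensk$ (a short computation gives $T_{2^k}\hgate\ket{0}=\zeta_8\ket{\lambda_k}$, which suffices since the global phase cancels with the adjoint preparation). The full circuit is the sandwich $(\text{prep})\,C\,(\text{prep}^\dagger)$; the catalytic condition \cref{eq:catalytic} ensures that after $C$ the catalyst wire again carries $\ket{\lambda_k}$ and the inner ancillas are back in $\ket{0}$, so the inverse preparation resets the catalyst wire to $\ket{0}$, leaving $U\ket{u}$ on the data wires.

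The hardest part is really bookkeeping: verifying the inclusion $\gensn{2^{k-1}}\subseteq\gensk$, confirming that $\ket{\lambda_k}$ is preparable inside $\gensk$, and checking that the spurious global phase from the preparation is harmless. All the substantive content lives in \cref{prop:emb-step} and the three base cases, and once those are in place the induction itself collapses to a few lines.
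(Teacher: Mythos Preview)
Your proposal is correct and follows essentially the same approach as the paper: induction on $k$ with base cases $k=1,2,3$ taken from \cite{Amy2020,Giles2013a}, and the inductive step realized by sandwiching the circuit for $\phi_k(U)$ between preparation and unpreparation of the catalyst $\ket{\lambda_k}$ on one extra wire. The only cosmetic difference is that the paper prepares $\ket{\lambda_k}$ via $T_{2^k}H\ket{0}$ and explicitly rewrites $H$ over $\gensk$, whereas you prepare via $T_{2^k}\hgate\ket{0}=\zeta_8\ket{\lambda_k}$ and let the global phase cancel in the sandwich; both are equally valid.
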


\begin{proof}
The left-to-right direction is an immediate consequence of the fact that the elements of $\gensk$ have entries in $\DDzk$. For the right-to-left direction, we proceed by induction on $k$. The cases of $k=1,2,3$ follow from \cite[Corollary~5.6]{Amy2020}, \cite[Corollary~5.27]{Amy2020}, and \cite[Theorem~1]{Giles2013a}, respectively. Now suppose that $k>3$, let $U\in \DDunnk{2^m}$, and let $(\phi_k,\ket{\lambda_k}): \DDuz{k}\to \DDuz{k-1}$ be the catalytic embedding of \cref{prop:emb-step}. Then $\phi_k(U)\in \DDunnkk{2^{m+1}}{k-1}$. Thus, by the induction hypothesis, there exists an $(m+1)$-qubit circuit $C$ for $\phi_k(U)$ over $\gensn{2^{k-1}}$ that uses no more than $k-3$ ancillas. For every state $\ket{u}$, we then have
\begin{equation}
\label{eq:exactsynthind}
C(\ket{u}\otimes \ket{\lambda_k}) = \phi_k(U)(\ket{u}\otimes \ket{\lambda_k}) = (U\ket{u})\otimes \ket{\lambda_k}.
\end{equation}
Now let $D$ be the circuit defined by $D=(I\otimes (T_{2^k} H))^\dagger \circ C\circ (I\otimes (T_{2^k} H))$. This is a circuit over $\gensk$, since $X=\hgate S^2 \hgate^\dagger$ implies that $H$ can be expressed as
\[
H = \zeta_8^\dagger \hgate=  X  (T_{2^k}^\dagger)^{2^{k-3}} X  (T_{2^k}^\dagger)^{2^{k-3}} \hgate
\]
when $k\geq 3$. By \cref{eq:exactsynthind}, and since $\ket{\lambda_k} = T_{2^k} H \ket{0}$, we then have
\begin{align*}
D(\ket{u}\otimes \ket{0})
&= (I\otimes (T_{2^k} H))^\dagger \circ C\circ (I\otimes (T_{2^k} H)) (\ket{u}\otimes \ket{0}) \\ 
&= (I\otimes (T_{2^k} H))^\dagger \circ C(\ket{u}\otimes \ket{\lambda_k}) \\ 
&= (I\otimes (T_{2^k} H))^\dagger((U\ket{u})\otimes \ket{\lambda_k}) \\
&= (U\ket{u})\otimes \ket{0}.
\end{align*}
That is, $D$ represents $U$ exactly and uses no more than $k-2$ ancillas, which completes the proof.
\end{proof}

The circuit constructed in the inductive step of \cref{thm:exactsynth} is depicted in \cref{fig:inductivecirc}. The ancillary qubits used by $C$ are not represented in \cref{fig:inductivecirc} (just as they are kept implicit in the proof of the theorem).

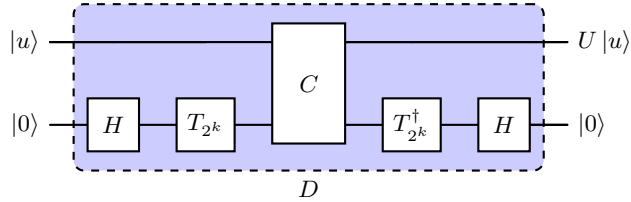
\begin{figure}[t]
\centering
\begin{quantikz}
\lstick{$\ket{u}$} & \qw \gategroup[2,steps=5,style={dashed,rounded corners,fill=blue!20, inner xsep=2pt},background,label style={label position=below,anchor=north,yshift=-0.2cm}]{{$D$}} &\qw & \gate[wires=2]{~~C~~} & \qw & \qw & \qw\rstick{$U\ket{u}$}\\
\lstick{$\ket{0}$} & \gate{H}\vphantom{T_{2^k}^\dagger}\hphantom{~} & \gate{T_{2^k}}\vphantom{T_{2^k}^\dagger} & & \gate{T_{2^k}^\dagger}\vphantom{T_{2^k}^\dagger} & \gate{H}\vphantom{T_{2^k}^\dagger}\hphantom{~} & \qw\rstick{$\ket{0}$}      
\end{quantikz}   
\caption{The circuit constructed in the proof of \cref{thm:exactsynth}.\label{fig:inductivecirc}}
\end{figure}

The construction of \cref{thm:exactsynth} can be used to give an alternative proof of \cite[Corollary~5.27]{Amy2020} and \cite[Theorem~1]{Giles2013a}, albeit one that uses more ancillas than is necessary. In the proof of \cref{thm:exactsynth}, the cases of $k=1$, $k=2$, and $k=3$ are all treated as base cases. Instead, one could use only the case of $k=1$ as the base case and establish the cases of $k=2$ and $k=3$ inductively. The resulting circuit would then use $k$ ancillas to represent an element of $\DDuzk$ for all $k$, rather than $k-2$ ancillas when $k>2$, as in the current proof. 

\section{Conclusion}
\label{sec:conc}

Several questions arise from this work. Firstly, can the proof \cref{thm:exactsynth} be modified so as to produce smaller circuits? The size of the circuits produced by the theorem depends on the exact synthesis algorithm applied in the base case, but the produced circuits are likely to remain large, even if improved synthesis methods such as \cite{improvedtof,ctsynthesis,improvedct,russell} are used. Lowering this cost is an important avenue for future work. Secondly, can \cref{thm:exactsynth} be generalized to Clifford-cyclotomic gate sets of degree $n\neq 2^k$ or can such an extension be shown to be impossible? Preliminary research indicates that arbitrary roots of unity can be represented using circuits over $\s{X, CX, CCX, H\otimes H}$ in the presence of appropriate catalysts, but the construction is more intricate than the one presented here. Finally, and further afield, can \cref{thm:exactsynth} be used to develop algorithms for the approximation of unitaries using Clifford-cyclotomic circuits, following prior work such as \cite[Appendix~A]{lower}, \cite{shorter}, or \cite{gridsynth}?

\subsubsection{Acknowledgements:}

The authors would like to thank Sarah Meng Li, Vadym Kliuchnikov, Kira Scheibelhut, and Peter Selinger for insightful comments on an earlier version of this note. The circuit diagram in this note was typeset using Quantikz \cite{quantikz}. 

\subsubsection{Disclosure of Interests:}

MA was supported by the Canada Research Chairs program. MA and NJR were supported by the Natural Sciences and Engineering Research Council of Canada (NSERC). SK was supported by ONR, whose sponsorship and continuing guidance of the ILIR program has made this research possible. These efforts were funded under ONR award N0001423WX00070. SK, SSM, and WM were supported by Naval Innovative Science and Engineering funding. WM was supported by the U.S. Department of Energy, Office of Science, National Quantum Information Science Research Centers, Quantum Systems Accelerator.

\bibliographystyle{splncs04}
\bibliography{multiqubitcyclo}

\end{document}